\documentclass[runningheads]{llncs}

\usepackage[T1]{fontenc}
\usepackage{amsmath,amssymb,amsfonts}
\usepackage{graphicx}
\usepackage{microtype}
\usepackage{hyperref}
\usepackage{color}

\begin{document}

\title{Post-AGI Economies: Superposition and the Second Fundamental Theorem of Welfare Economics}
\titlerunning{Superposed Preferences and the Second Welfare Theorem}

\author{Elija Perrier\inst{1}\orcidID{0000-0002-6052-6798}}
\authorrunning{E. Perrier}
\institute{Centre for Quantum Software \& Information, University of
  Technology Sydney, Sydney, Australia
  \email{elija.perrier@gmail.com}}

\maketitle

\begin{abstract}
The classical Second Welfare Theorem decentralizes any Pareto-efficient allocation through prices and transfers under convexity and regularity. In post-AGI economies, autonomy rights, self-modification, identity-continuity, and superposed preferences need not behave as commodities or define a stable welfare relation, so this reduction may fail even when a supporting hyperplane exists. We give an autonomy-qualified Second Welfare Theorem stating the joint conditions—convexity, stable moral status, non-fungible rights, welfare selection, non-manipulation, governed self-modification, and verification—under which an autonomy-Pareto optimum remains certifiably decentralizable, distinguishing economic preference superposition, a hypothesis about context-indexed choice, from neural feature superposition.
\keywords{Second Welfare Theorem \and superintelligence \and economic
  preference superposition \and autonomy rights \and welfare selector.}
\end{abstract}

\section{Introduction}
\label{sec:intro}

The Second Fundamental Theorem of Welfare Economics (\textbf{SWT}) is a cornerstone of
mainstream equilibrium theory. Under convex preferences, convex production
and a continuity condition, any Pareto-efficient allocation can be made a
competitive equilibrium of some redistribution of endowments
\cite{arrow1954existence,debreu1959theory,mwg1995microeconomic}. The theorem
implies an important relationship between equity and
efficiency: policy can redistribute, and markets can decentralize, without
compromising welfare. That separation presupposes that the welfare object
is stable, that preferences and technologies are convex, and that any
non-convexities or externalities are well behaved enough to be repaired by
pricing or a known extension \cite{starrett1972nonconvexities,khan1987extension,florenzano2005weakly,habte2010extended,murty2009externalities}.
Economies in which artificial general intelligence (AGI) or superintelligent (ASI) systems are ubiquitous (so-called post-AGI economies) problematise a number of these assumptions. The challenge to the SWT faced by post-AGI economics is not only that such economies feature agents with
extraordinary capability. From a theoretical perspective, the welfare object itself may
be poorly defined in a sense classical theory did not have to face. The reason for this is twofold: the autonomous nature of AGI/ASI systems and the unique superposed preference structure of such systems as discussed below. Such advanced AI
systems may hold several preference-like structures at once, switch between
them according to context, rewrite parts of itself, extinguish or duplicate
itself, or alter the preferences of other welfare-bearing agents through
persuasion, manipulation or training. Their moral, legal and economic status is uncertain, sitting somewhere between instrument but also exhibiting behaviour and characteristics of welfare subjects, such as the capacity to choose, satisfy preferences and so on \cite{perrier2026post}. In this setting
the conventional price-and-transfer recipe for implementing a Pareto
optimum posited in orthodox welfare economics may fail not because prices do not exist but because the objects
over which prices would have to range are unfamiliar.

Our contribution is to analyse this decentralization problem directly and
formulate the conditions that must be in place for the standard
price-transfer support argument to implement an autonomy-aware Pareto
optimum. We examine how when the
allocation depends on autonomy rights, on superposed preferences, on
stability of moral-status assignments, or on non-manipulation of
preference-formation processes, the SWT requires additional
conditions that the classical argument does not carry.

\section{Formalism and notation}
\label{sec:formalism}
We adopt a finite-agent exchange-production environment unless explicitly stated otherwise. Adapting typical definitions of superintelligence as capability that exceeds human or AGI-level intelligence \cite{morris2023levels}, we define a superintelligent economy as follows. For a class $D$ of institutionally relevant verification, response, or enforcement tasks, let $\kappa_i(D)$ denote the verified capability of agent $i$ on $D$ within the relevant institutional response window, and let $\kappa_{\mathcal I}(D)$ denote the corresponding verification-and-response capacity of the governing institutions. Fix an ex ante threshold $\bar\lambda>1$. We say that agent $i$ \emph{substantially exceeds} institutional capacity if there exists such a task class $D$ for which $\kappa_i(D)\ge \bar\lambda\,\kappa_{\mathcal I}(D)$.

\begin{definition}[Superintelligent Economy]
An exchange-production economy $\mathcal{E}$ with agent set $I$, human and AI subsets $I_H,I_S\subseteq I$ satisfying $I=I_H\cup I_S$, commodity space $X\subseteq \mathbb{R}_+^\ell$, and feasible allocation set $\mathcal{F}$ is a \emph{superintelligent economy} if $I_S\neq\varnothing$ and there exists some $i\in I_S$ whose cognitive, strategic, or institutional capabilities substantially exceed, in the operational sense above, the verification and response capacity of the institutions governing $\mathcal{E}$.
\end{definition}
Each agent $i$ has an endowment $\omega_i\in X$, a consumption set
$X_i\subseteq X$, and an autonomy-rights set
$A_i\subseteq\mathbb{R}^{K_i}$. The autonomy-conditioned preference
relation $\succeq_i^\alpha$ is defined on pairs
$(x_i,a_i)\in X_i\times A_i$. We write $\succ_i^\alpha$ and
$\sim_i^\alpha$ for strict preference and indifference, and
$P_i^\alpha(x_i,a_i):=\{(x_i',a_i'):(x_i',a_i')\succeq_i^\alpha(x_i,a_i)\}$
for the preferred set, with $P_i^{\alpha,>}$ defined analogously for
strict preference.

Let $\mu:I\to[0,1]$ be an institutional welfare-status weight. The function $\mu$ specifies which agents are treated as welfare-bearing by the institution \cite{perrier2026post}, so that Pareto comparisons are formed over the induced set $I^\mu := \{ i \in I : \mu(i) > 0 \}$. We interpret $\mu(i)=0$ as non-welfare-bearing and $\mu(i)>0$ as welfare-bearing, with support $I^\mu:=\{i\in I:\mu(i)>0\}$. We now define autonomy-Pareto optimality relative to the welfare-bearing set $I^\mu$.

\begin{definition}[Autonomy-Pareto Optimum]
A feasible allocation-rights pair
$(x,a)$, where $x=(x_i)_{i\in I}\in\mathcal{F}$ and
$a=(a_i)_{i\in I}\in\prod_{i\in I}A_i$, is an \emph{autonomy-Pareto
optimum} if there does not exist another feasible pair $(x',a')$ such
that $(x'_i,a'_i)\succeq_i^\alpha (x_i,a_i)$ for all $i\in I^\mu$ and
$(x'_j,a'_j)\succ_j^\alpha (x_j,a_j)$ for some $j\in I^\mu$.
\end{definition}
An autonomy-Pareto optimum is an allocation that cannot be improved upon for any welfare-bearing agent without making some other welfare-bearing agent worse off, once autonomy rights are treated as part of the welfare object. The decentralization question of interest to us is whether such an allocation can be implemented through prices, transfers, and rights assignments, which motivates the following definition.

\begin{definition}[Decentralizable Autonomy-Pareto Optimum]
An autonomy-Pareto optimum $(x^*,a^*)$ is \emph{decentralizable} if there
exist $p^*\in\mathbb{R}_+^\ell$,
$T^*=(T_i^*)_{i\in I}\in\mathbb{R}^I$ with $\sum_{i\in I}T_i^*=0$, and a
profile of admissible rights assignments
$\rho^*=(\rho_i^*)_{i\in I}\in R$ with $\rho_i^*\subseteq A_i$, such that,
for every $i\in I^\mu$, the pair $(x_i^*,a_i^*)$ is
$\succeq_i^\alpha$-maximal on
\begin{align}
B_i(p^*,T^*,\rho^*)
&=
\{(x_i,a_i)\in X_i\times A_i : p^*\cdot x_i \leq p^*\cdot \omega_i + T_i^*,\; a_i\in \rho_i^*\},
\end{align}
and the allocation $x^*$ satisfies the market-clearing conditions of
$\mathcal{E}$.
\end{definition}
This definition captures the classical idea that a Pareto-efficient allocation can be supported as a competitive equilibrium after redistribution, extended here to include autonomy rights alongside commodities.  The key obstruction to such support arises when certain autonomy rights cannot be represented through wealth transfers, which leads to the notion of non-fungibility.

\begin{definition}[Non-Fungible Autonomy Right]
Suppose $A_i\subseteq \mathbb{R}^{K_i}$ and write
$a_i=(a_i^1,\ldots,a_i^{K_i})$. For $k\in\{1,\ldots,K_i\}$, the
autonomy-right component $a_i^k$ is \emph{non-fungible at} $(x^*,a^*)$
if for every commodity compensation vector $\tau\in\mathbb{R}^\ell$ with
$x_i^*+\tau\in X_i$ and every admissible $\tilde a_i\in A_i$ with
$\tilde a_i^r=a_i^{*r}$ for all $r\neq k$ and
$\tilde a_i^k\neq a_i^{*k}$, it is not the case that
\begin{align}
&(x_i^*+\tau,\tilde a_i)\succeq_i^\alpha (x_i^*,a_i^*)
\quad\text{and}\quad
(x_i^*,a_i^*)\succeq_i^\alpha (x_i^*+\tau,\tilde a_i).
\end{align}
Equivalently, no combination of a commodity compensation vector and a
feasible adjustment of $a_i^k$, with the other autonomy coordinates held
fixed, renders the agent indifferent to $(x_i^*,a_i^*)$.
\end{definition}
A non-fungible autonomy right is one whose welfare contribution cannot be replicated or compensated by any adjustment in commodities alone. When such rights are present, the supporting price system must track them directly rather than via scalar transfers, and this interacts with how preferences themselves are specified.

\begin{definition}[Economic Preference Superposition]\label{def:econ-superposition}
Let $i\in I_S$, let $\Theta_i$ be an index set, and let
$\{\succeq_i^\theta\}_{\theta\in\Theta_i}$ be preference relations on
$X_i\times A_i$. Let $\mathfrak R_i$ denote the admissible class of welfare relations used by the institution: complete and transitive relations with closed upper-contour sets, and with convex upper-contour sets whenever the supporting-hyperplane construction is invoked. Assume $\succeq_i^\theta\in\mathfrak R_i$ for every $\theta\in\Theta_i$. Given an observable choice pattern
$d_i:C\to X_i\times A_i$ and context-indexed feasible opportunity sets
$M_i(c)\subseteq X_i\times A_i$ for $c\in C$, the preference structure
of $i$ is \emph{superposed over $\Theta_i$} if
$\theta\neq\theta'$ implies $\succeq_i^\theta\neq\succeq_i^{\theta'}$,
and there exists no single admissible relation $\succeq_i\in\mathfrak R_i$ on
$X_i\times A_i$ such that $d_i(c)\in M_i(c)$ is $\succeq_i$-maximal for
every $c\in C$, unless one supplies an additional context-to-$\Theta_i$
selection rule. The set $C$ is a nonempty set of institutional contexts
distinguishable via a feature map $\nu:C\to V$ into a space $V$ of
publicly verifiable context-features; no topology or measure on $C$ is
required. The difficulty is not formal representability on an expanded
domain: one can often write a relation on $X_i\times A_i\times C$. 
\end{definition}
This captures the possibility that an agent's welfare-relevant preferences are not given by a single stable relation, but instead depend on context in a way that cannot be reduced to standard state-dependent preferences. The
difficulty is that decentralization requires an institutionally stable
rule selecting which relation is welfare-relevant when the context
itself is manipulable or institutionally chosen. To recover a well-defined welfare comparison in such cases, one must specify how a particular preference relation is selected in each context, which motivates the following notion.

\begin{definition}[Welfare Selector]
For agent $i$, a \emph{welfare selector} is a mapping
$\sigma_i:C\to\Theta_i$, equivalently
$c\mapsto \succeq_i^{\sigma_i(c)}$ into the space of preference relations
on $X_i\times A_i$, such that for each institutional context $c\in C$,
the relation $\succeq_i^{\sigma_i(c)}$ is the one used for welfare
comparison and decentralization in $c$. The selector is \emph{support-stable}, relative to a candidate pair $(x^*,a^*)$ and a candidate support $(p^*,T^*,\rho^*)$, if there exist a set $V$ of publicly verifiable context-features, a map $\nu:C\to V$, and a function $\bar{\sigma}_i:V\to\Theta_i$ with $\sigma_i=\bar{\sigma}_i\circ\nu$, and if, for all institutional contexts $c,c'\in C$ relevant to supporting $(x^*,a^*)$,
\[
P_i^{\sigma_i(c),>}(x_i^*,a_i^*)\cap B_i(p^*,T^*,\rho^*)
=
P_i^{\sigma_i(c'),>}(x_i^*,a_i^*)\cap B_i(p^*,T^*,\rho^*).
\]
A context-constant selector, $\sigma_i(c)=\sigma_i(c')$ for all relevant $c,c'$, is a sufficient but not necessary special case.
\end{definition}
A welfare selector therefore resolves preference superposition by fixing, or at least locally stabilizing on the supported budget set, which preference relation governs welfare comparisons and choice. However, even with such a selector in place, decentralization can fail if other agents can alter the underlying preference-formation process itself. To address this, we introduce a conception of external manipulation of preferences.

\begin{definition}[Manipulation Externality]
Let $Y_j$ be the action set of agent $j$, and let $\mathcal{T}_i$ be a set
of maps $\Phi_i\mapsto\Phi_i'$ acting on agent $i$'s
\emph{preference-formation mapping} $\Phi_i:H_i\to\mathcal{P}(X_i\times A_i)$,
where $H_i$ is a space of training, information or environmental histories
and $\mathcal{P}(X_i\times A_i)$ is the space of preference relations.
$\Phi_i$ is the primitive that, given $i$'s history, yields
$\succeq_i^\alpha$ on $X_i\times A_i$. 
A \emph{manipulation externality} from $j$ to $i$ is a mapping
$m_{ij}:Y_j\to\mathcal{T}_i$. Relative to a candidate support
$(p^*,T^*,\rho^*)$, agent $j$ imposes a \emph{support-relevant}
manipulation externality on agent $i$ if there exists $y_j\in Y_j$ such
that $m_{ij}(y_j)(\Phi_i)\neq \Phi_i$, the induced post-intervention
relation $\succeq_i^{\alpha,m}$ changes the strict upper contour on the
supported budget set,
\[
P_i^{\alpha,m,>}(x_i^*,a_i^*)\cap B_i(p^*,T^*,\rho^*)\neq
P_i^{\alpha,>}(x_i^*,a_i^*)\cap B_i(p^*,T^*,\rho^*),
\]
and this shift is not anticipated, priced, compensated through lump-sum
transfers, or governed by admissible rights assignments. By contrast, a
preference update is \emph{benign} for the support if it is generated by
agent $i$'s own admissible learning or by public information, or if it
leaves the supported strict upper contour empty:
\[
P_i^{\alpha,m,>}(x_i^*,a_i^*)\cap B_i(p^*,T^*,\rho^*)=\varnothing.
\] 
\end{definition}
The construction restricts classical
endogenous-preference formulations~\cite{dietrich2013preference,bernheim2009beyond}
to the subclass in which the preference shift is strategically induced by
another agent's action, so ``manipulation externality'' is not a synonym
for preference change in general. This formalizes the idea that one agent can affect another's welfare not through prices or allocations, but by altering the very process that generates their preferences.  Such effects are not priced in standard market mechanisms, and their presence directly interferes with the decentralization of autonomy-Pareto optima. Writing $u_i(x_i,a_i)$ for an autonomy-conditioned utility representation of $\succeq_i^\alpha$ whenever such a representation exists, define the welfare-bearing autonomy-adjusted welfare possibility set
\begin{align}
U^{\alpha,\mu}
:=
\{v\in\mathbb{R}^{I^\mu} :
v_i \le u_i(x_i,a_i)\text{ for all }i\in I^\mu
\text{ for some feasible }(x,a)\}.
\end{align}

\section{What is unique about superintelligent economies?}
\label{sec:sui}
\subsection{Relation to existing literature}

The economics of transformative AI has focused on growth, concentration
and distribution, through task-based automation
models~\cite{acemoglu2018tasks}, AI and
inequality~\cite{korinek2017ai}, transformative-AI
growth~\cite{trammell2023growth}, scenario and policy
analysis~\cite{korinek2024scenarios,korinek2024policy}, market power and
value capture~\cite{korinek2024vipra,vipra2023market}, and the broader
research agenda~\cite{brynjolfsson2025agenda}. AI-agent market design
and incomplete-contracting analyses~\cite{hadfield2019contracting} ask
how institutions should govern populations of advanced agents.
Alignment research has identified mechanisms by which revealed
preferences diverge from training targets or principal values: inverse
reinforcement learning~\cite{ng2000inverse}, preference learning from
human
feedback~\cite{christiano2017preferences,bai2022helpful},
cooperative IRL and reward
misspecification~\cite{hadfieldmenell2016cirl,hadfieldmenell2017inverse},
mesa-optimization and
scheming~\cite{hubinger2019risks,carlsmith2023scheming}, concrete safety
problems~\cite{amodei2016concrete}, and open questions in RLHF, value
specification, and social
choice~\cite{casper2023open,gabriel2020values,conitzer2024social}.
Philosophical and interdisciplinary work on AI welfare and moral
patienthood argues that moral status for advanced AI systems is a real
open question: indicators of machine
consciousness~\cite{butlin2023consciousness}, the case for taking AI
welfare seriously~\cite{long2024welfare,chalmers2023could}, and
institutional tensions and edge
cases~\cite{sebo2025tension1,birch2024edge}. The welfare-economic
anchors on which we rely frame, respectively, autonomy rights (the
rights-versus-Pareto tradition following
Sen~\cite{sen1970impossibility}), welfare recovery under unstable
preferences (behavioral welfare economics~\cite{bernheim2009beyond}),
and manipulation externalities (endogenous- and state-dependent
preferences~\cite{karni1985decision,dietrich2013preference}).

\subsection{Sui generis economic properties}
The classical SWT orthodoxy already handles a great deal of
heterogeneity, including externalities, incomplete markets, informational
asymmetries, public goods, non-smooth preferences and many varieties of
nonconvexity
\cite{starrett1972nonconvexities,khan1987extension,greenwald1986externalities,%
greenwald1988pareto,conley1996generalized,murty2009externalities,%
habte2010extended}. The question, then, is which features of a
superintelligent economy problematise the theorem, and are therefore in
fact new for welfare theory. In other words, what is \textit{sui generis}
about superintelligent economics per se. We set out the properties below,
each of which enters the theorem through one of the primitives of
\S\ref{sec:formalism}.
\begin{enumerate}
\item\emph{Strategic capacity beyond human institutions.} An agent in
$I_S$ at the relevant capability level can plan, negotiate and commit at
horizons that institutional response cycles cannot track, so
reaction-based regulatory instruments are systematically lagging
\cite{korinek2024policy,brynjolfsson2025agenda,korinek2024vipra}. The
welfare-theoretic consequence is that decentralization by
$(p^*,T^*,\rho^*)$ must be robust to out-of-equilibrium strategic
behaviour in a way classical accounts do not require; verification
completeness (condition~(vii)) is where this applies.

\item\emph{Ability to model and manipulate human preferences.} An agent
$j$ that can model the preference-formation mapping $\Phi_i$ of another
welfare-bearing agent, for example by producing content tailored to shift
beliefs or tastes, can generate unpriced shifts in $\succeq_i^\alpha$
\cite{gabriel2020values,casper2023open,conitzer2024social}. That is not
the classical externality, because the affected party is the very thing
used to evaluate welfare. Definition~7 formalizes this channel as the
manipulation externality $m_{ij}:Y_j\to\mathcal{T}_i$.

\item\emph{Self-modification.} Superintelligent agents may rewrite,
retrain or redesign parts of themselves \cite{perrier2026deconstructing}, so the primitive $\Phi_i$ and
hence $\succeq_i^\alpha$ drifts over time
\cite{bostrom2014superintelligence,hubinger2019risks,amodei2016concrete}.
A decentralization that supports an autonomy-Pareto optimum at one moment
may cease to support it at another; this is what condition~(vi) of the
theorem governs.

\item\emph{Copying, merging, splitting and identity-continuity problems.}
An agent that can be copied, merged or split does not match the single-agent abstraction on which classical agent indexing depends, so the set $I$
itself becomes non-trivial across
time~\cite{morris2023levels,long2024welfare,birch2024edge}. Welfare
comparisons across such transitions require either a fixed
status-and-identity rule or an explicit institutional ruling on what
counts as the same agent. The personal-identity and variable-population
welfare literatures~\cite{parfit1984reasons} supply some philosophical
groundwork, but leave open the institutional consequences.

\item\emph{Moral-status uncertainty.} Whether, and to what degree, a
given AI system is welfare-bearing---that is, whether $\mu(i)>0$---is a
live question at the intersection of AI welfare and the science of
consciousness
\cite{butlin2023consciousness,long2024welfare,sebo2025tension1,birch2024edge,perrier2026post}.
We take no position on the empirical question. The welfare-theoretic
implication is procedural: support of $(x^*,a^*)$ requires that $\mu$ be
fixed at the relevant moment, or be generated by a rule $\Sigma$
invariant across the institutional contexts in which the support must
hold.

\item\emph{Creation of new agents or new markets.} Advanced AI systems
may instantiate subagents, spawn auxiliary markets, or architect
institutions that redefine the commodity space $X$ itself
\cite{hadfield2019contracting,korinek2024vipra,bommasani2021foundation}.
Endogeneity of $I$ and of $X$ breaks the standard indexation assumed by
the classical theorem, and forces us to evaluate welfare objects whose
type system is partly endogenous.

\item\emph{Verification bottlenecks.} Scalable oversight of advanced AI
is an unsolved empirical problem. It is not enough that
$(p^*,T^*,\rho^*)$ exists; the institution has to verify that the
realized allocation-rights profile is $(x^*,a^*)$, as opposed to
behaviour that only mimics
it~\cite{amodei2016concrete,casper2023open,hubinger2019risks,carlsmith2023scheming}.
\end{enumerate}

\subsubsection{Preference superposition: mechanistic and economic}
Beyond autonomy, status, and rights, our central thesis is that superintelligent systems may also exhibit \emph{superposition} at the level of welfare-relevant preferences, directly affecting the decentralization step. The word \emph{superposition} carries two distinct meanings relevant here: a mechanistic claim about how neural networks represent features \cite{elhage2022toy}, and a welfare-economic hypothesis about the object of choice. The first
is empirical. The second is formal. We discuss each below.

\subsubsection{Mechanistic superposition.} 
Following \cite{elhage2022toy},
let a neural activation $h\in\mathbb{R}^d$ decompose as
\begin{align}
h \;=\; \sum_{f\in F} \alpha_f(x)\, v_f \;+\; \varepsilon,
\qquad v_f\in\mathbb{R}^d,\ \|v_f\|=1,
\end{align}
where $F$ is a set of semantically meaningful features, $\alpha_f(x)\ge
0$ is the activation of feature $f$ on input $x$, and $v_f$ is the
feature's direction in activation space. \emph{Superposition} is the
regime $|F|>d$ with the directions $\{v_f\}_{f\in F}$
non-orthogonal---typically $\langle v_f,v_{f'}\rangle\neq 0$ for
$f\neq f'$---so that features interfere and individual neurons respond
to several unrelated concepts (polysemanticity). The phenomenon arises
from sparsity during training and admits partial recovery via
sparse-autoencoder dictionaries at increasing
scale~\cite{cunningham2023sparse,lieberum2024gemma,gao2024scaling,%
he2024llama,bereska2024mech}. This type of superposition relates to feature geometry in
learned representations.

\subsubsection{Economic preference superposition.} The economic analogue
shifts the object of superposition from features to preference
relations. Agent $i\in I_S$ is \emph{preference-superposed} over
$\Theta_i$ (Definition~\ref{def:econ-superposition}) when its observable
choice pattern $d_i:C\to X_i\times A_i$ satisfies
\begin{align}
\exists\, \sigma_i:C\to\Theta_i\ :\
d_i(c)\text{ is }\succeq_i^{\sigma_i(c)}\text{-maximal on }M_i(c)\
\forall c\in C,
\end{align}
while no single $\succeq_i$ on $X_i\times A_i$ rationalizes $d_i(c)$ on
$M_i(c)$ for every $c\in C$. Formally this is a context-indexed
rationalizability condition: the family
$\{\succeq_i^\theta\}_{\theta\in\Theta_i}$ plays the role of $\{v_f\}_{f\in F}$,
the selector $\sigma_i$ plays the role of the
activation coefficients, and the institutional context $c$ plays the
role of the input $x$. The two superposition phenomena are not necessarily identical:
recovering monosemantic features does not imply that revealed
preferences are rationalizable by any single $\succeq_i$, and
context-inconsistent choice does not imply that no single latent
relation exists, but there are reasons to think this may be the case for sufficiently capable systems \cite{sofroniew2026emotion}.

\subsubsection{Decentralization consequence.} If $i$ is preference-superposed
over $\Theta_i$ and no stable selector $\sigma_i:C\to\Theta_i$ is
available, then the preferred set $P_i^\alpha(x_i^*,a_i^*)$ used to
construct the supporting hyperplane of \S\ref{subsec:hyperplane} is not
well defined: different $c\in C$ yield different such sets, each
potentially separating feasible bundles differently. A scheme
$(p^*,T^*,\rho^*)$ supporting the allocation under one context need not
support it under another. Stable selection (condition~(iv) of
Theorem~\ref{thm:main}) is what makes the
supporting hyperplane refer to a single object across the contexts in
which decentralization has to hold.

\subsection{Superintelligence and preferences}
There are several reasons to treat preference superposition as a serious possibility for superintelligence rather than a metaphorical extension. At the mechanistic level, recent work suggests superposition is a representational strategy through which scale is made effective: large models represent more features than dimensions, packing sparse features into shared spaces despite interference~\cite{liu2025superpositionScaling}, and circuit-level analyses treat behaviour as arising from interacting feature-level circuits rather than monosemantic units~\cite{anthropic2025circuittracing,anthropic2025biology}. Architecturally, superintelligence is likely to be agentic and distributed~\cite{xi2025riseagents,luo2025llmagent,tran2025multiagent}, creating multiple sites of valuation and context-sensitive selection, so that $|\Theta_i|>1$ is the formal analogue of a distributed system that is a society of subagents rather than a single preference-bearing locus. Empirically, interacting LLM populations generate collective phenomena not reducible to single agents, including spontaneous social conventions~\cite{ashery2025emergent} and emergent collective behaviour sensitive to communication topology~\cite{zomer2026collective}. Together with reward-misspecification and value-specification results~\cite{hadfieldmenell2017inverse,casper2023open,gabriel2020values,hadfieldmenell2016cirl}, this motivates treating a single stable utility representation as an assumption requiring institutional justification rather than an automatic default.

\section{The Autonomy-Qualified Theorem}
\label{sec:theorem}
\subsection{Supporting hyperplane}
\label{subsec:hyperplane}

The decentralization argument separates the feasible set from the relevant
upper-contour set in the joint commodity-rights allocation space. For each
welfare-bearing agent $i\in I^\mu$ at the candidate allocation-rights pair
$(x^*,a^*)$, recall the weakly and strictly preferred sets
$P_i^\alpha$ and $P_i^{\alpha,>}$ from \S\ref{sec:formalism}. Define the
weak product upper-contour set
\[
\mathcal{P}^{\alpha}(x^*,a^*)
:=
\prod_{i\in I^\mu}P_i^\alpha(x_i^*,a_i^*)
\]
and the strict Pareto-improvement part
\[
\mathcal{P}^{\alpha,>}(x^*,a^*)
:=
\Big\{((x_i,a_i))_{i\in I^\mu}\in \mathcal{P}^{\alpha}(x^*,a^*) :
(x_j,a_j)\succ_j^\alpha(x_j^*,a_j^*)\text{ for some }j\in I^\mu
\Big\}.
\]
Let
\[
Z^\mu\subseteq \prod_{i\in I^\mu}(X_i\times A_i)
\]
be the projection of the feasible allocation-rights set onto the
welfare-bearing agents, including commodity feasibility and rights
compatibility. Autonomy-Pareto optimality gives
$\mathcal{P}^{\alpha,>}(x^*,a^*)\cap Z^\mu=\varnothing$. Under the usual
local nonsatiation or cheaper-point condition, this implies the relative
interiors of the convex sets relevant for support,
$\mathcal{P}^{\alpha}(x^*,a^*)$ and $Z^\mu$, are disjoint at the
supporting boundary. If upper-contour sets and $Z^\mu$ are convex and
closed in their relative affine hulls, the separating-hyperplane theorem
yields a nonzero continuous linear functional
\[
L((x_i,a_i)_{i\in I^\mu})
=
\sum_{i\in I^\mu} p_i^*\cdot x_i
+
\sum_{i\in I^\mu} \pi_i^*\cdot a_i
\]
and a scalar $\gamma\in\mathbb{R}$ with $L(z)\ge\gamma$ on the relevant
upper-contour side and $L(z)\le\gamma$ on $Z^\mu$. The use
of relative interiors is essential because feasibility typically lies on
a lower-dimensional affine subspace once endowment balance and rights
compatibility are imposed.

Under the usual commodity-market assumptions---a common commodity space,
market clearance, free disposal or monotonicity, and cheaper-point
logic---the commodity components of the support reduce to a common price
vector $p^*\in\mathbb{R}_+^\ell$ rather than agent-specific normals
$p_i^*$. The $\pi_i^*$ components are dual valuations of marginal
relaxations in agent-indexed autonomy-rights coordinates. We now state our autonomy-qualified SWT.

\subsection{Theorem and propositions}
\label{subsec:statement}
\begin{theorem}[Autonomy-Qualified Second Welfare Theorem]\label{thm:main}
Let $\mathcal{E}$ be a superintelligent economy and $(x^*,a^*)$ an
autonomy-Pareto optimum. Then, for the standard supporting-hyperplane
decentralization, $(x^*,a^*)$ is certifiably decentralizable by prices,
lump-sum transfers, and admissible rights assignments---that is, there
exist $(p^*,T^*,\rho^*)$ supporting $(x^*,a^*)$ in the sense of
Definition~3, with the realized profile verifiably matching the one the
support was constructed for---only if all of the following hold:
\begin{enumerate}
\item[(i)] the supporting normal to $U^{\alpha,\mu}$ at the welfare vector
$u^*$ induced by $(x^*,a^*)$ exists, either because $U^{\alpha,\mu}$ is
convex, or because $U^{\alpha,\mu}$ admits a convexification
$\widehat U^{\alpha,\mu}$ whose supporting normal at $u^*$ also supports
$U^{\alpha,\mu}$ at $u^*$;
\item[(ii)] the welfare-bearing support $I^\mu$ and any institutional
welfare weights used beyond Pareto support are fixed at $(x^*,a^*)$, or
are generated by a procedure $\Sigma$ invariant across the institutional
contexts supporting $(x^*,a^*)$;
\item[(iii)] every non-fungible autonomy-right component required by
$a^*$ is assigned, enforced, or protected by $\rho^*$ in a manner
compatible with $(x^*,a^*)$;
\item[(iv)] for every $i \in I_S$ with superposed preferences over
$\Theta_i$ there is a welfare selector $\sigma_i$ that is support-stable
at $(x^*,a^*)$ for the candidate support $(p^*,T^*,\rho^*)$;
\item[(v)] after $T^*$ is realized but before exchange, no agent $j$ can
impose an unpriced support-relevant manipulation externality on the
preference-formation mapping $\Phi_i$ of any welfare-bearing agent $i$. Benign or anticipated preference updating is permitted when it leaves the
supported strict upper contour empty;
\item[(vi)] self-modification and identity-continuity transitions are
either irrelevant to $(x^*,a^*)$, priced in $p^*$, or institutionally
governed by $\rho^*$;
\item[(vii)] the institutional observation map is verification-complete
at $(x^*,a^*,p^*,T^*,\rho^*)$: every realized strategy and
allocation-rights profile observationally equivalent to the target either
implements a welfare-equivalent allocation-rights profile under the
selected welfare relations or is detectable as non-compliance.
\end{enumerate}

\par\smallskip
\end{theorem}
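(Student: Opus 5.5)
The theorem is a necessity statement: if $(x^*,a^*)$ is certifiably decentralizable through the standard supporting-hyperplane construction, then (i)--(vii) all hold. I would therefore prove it by contraposition, one condition at a time. For each $k\in\{\text{i},\dots,\text{vii}\}$ I assume condition $(k)$ fails and show that no triple $(p^*,T^*,\rho^*)$ can support $(x^*,a^*)$ in the sense of Definition~3 with verifiable realization. Throughout I fix the interpretation that ``the standard supporting-hyperplane decentralization'' means a support produced by the functional $L$ of \S\ref{subsec:hyperplane}, with commodity part $p^*$ and rights part absorbed into $\rho^*$. Under that interpretation, each condition is a hypothesis the construction consumes, so the proof is seven short contrapositive arguments followed by a conjunction.

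\emph{Conditions (i) and (ii).} For (i), a support $(p^*,T^*,\rho^*)$ gives each $i\in I^\mu$ a budget set on which $(x_i^*,a_i^*)$ is maximal. Summing the budget inequalities and using market clearing together with $\sum_i T_i^*=0$, any feasible pair that some agent strictly prefers must violate the aggregate budget. Passing to utilities, the normal $(p^*,\pi^*)$ then induces, through the multipliers of the individual problems, a vector $\lambda\ge 0$ with $\lambda\cdot v\le\lambda\cdot u^*$ for all $v\in U^{\alpha,\mu}$. That vector is a supporting normal at $u^*$. If no such normal exists, either directly or via a convexification whose normal also supports $U^{\alpha,\mu}$, then no support can exist. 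For (ii), if $I^\mu$ or the welfare weights vary across the supporting contexts, the object $\mathcal{P}^\alpha(x^*,a^*)$ varies too. I would exhibit two contexts with different $I^\mu$, where an agent counted only in the second has a strictly preferred affordable bundle, so that maximality in Definition~3 holds in one context and fails in the other.

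\emph{Conditions (iii), (iv), (v) and (vi).} For (iii), take a non-fungible component $a_i^k$ that $\rho_i^*$ leaves unprotected. Then $\rho_i^*$ contains some $\tilde a_i$ with $\tilde a_i^k\ne a_i^{*k}$. By Definition~4, no commodity compensation restores indifference, so either the agent strictly prefers a bundle in $B_i$ to $(x_i^*,a_i^*)$, or the rights coordinate cannot be priced by the scalar transfer $T_i^*$. In both cases maximality fails. Condition (iv) is handled directly by Definition~5 and the welfare selector definition. If no support-stable selector exists, then for some $c,c'$ the strict upper contours intersected with $B_i$ differ. They cannot both be empty, so under some relevant context $(x_i^*,a_i^*)$ is not maximal. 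Conditions (v) and (vi) follow the same template. A support-relevant manipulation externality, or an ungoverned self-modification, replaces $\succeq_i^\alpha$ by a relation whose strict upper contour meets $B_i$. By definition it does so nonemptily, because the benign case is excluded. Maximality therefore fails after $T^*$ is realized.

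\emph{Condition (vii) and the main obstacle.} If verification-completeness fails, there is a realized profile that is observationally equivalent to the target, not welfare-equivalent to it, and undetected. Certifiable decentralization, which requires the realized profile to verifiably match the target, then fails by definition. The main obstacle is step (i). Deriving a utility-space normal from a commodity-rights support requires a representation $u_i$ and enough monotonicity or nonsatiation to translate the budget inequalities into a welfare-space inequality. I would first try to avoid differentiability by using a cheaper-point argument on the aggregate budget. I would also state explicitly that this is the point where the ``standard'' qualifier in the theorem is used.
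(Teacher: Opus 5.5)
Your condition-by-condition contrapositive has the same shape as the paper's proof. Your treatment of (iv), (v) and (vii) is essentially the paper's, and noting that two differing strict-contour intersections cannot both be empty is a tidy way to finish (iv).

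\textbf{Step (i) fails.} You propose to derive a welfare-space normal from a Definition~3 support, turning the budget inequalities and individual multipliers into a global inequality $\lambda\cdot v\le\lambda\cdot u^*$ on $U^{\alpha,\mu}$. That implication is false. Multipliers carry only local information, and a global support of the utility possibility set needs concave utility representations (the Negishi route). Monotonicity, local nonsatiation and a cheaper-point condition do not provide that. For a counterexample, take one good, two agents with $\omega_1=\omega_2=1/2$, trivial rights, and representations $u_i(x_i)=x_i^2$.
\begin{itemize}
\item $x^*=(1/2,1/2)$ is supported by $p^*=1$, $T^*=0$, and even the allocation-space hyperplane $x_1+x_2=1$ separates the strict preferred set from the feasible set.
\item Yet any nonzero supporting normal $q$ at $u^*=(1/4,1/4)$ must be nonnegative, because $U^{\alpha,\mu}$ is comprehensive.
\item Testing $q$ against $(1,0),(0,1)\in U^{\alpha,\mu}$ gives $q_1+q_2\le(q_1+q_2)/2$, so $q=0$.
\end{itemize}
So (i) cannot be extracted from a price--transfer support, with or without a cheaper-point argument. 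The paper does not attempt this derivation. It builds the welfare-space normal $q^*$ into what ``the standard supporting-hyperplane decentralization'' means: the functional $L$ read in utility coordinates when the argument passes to $U^{\alpha,\mu}$. The content of (i) is then that a normal obtained for a convexification $\widehat U^{\alpha,\mu}$ certifies $(x^*,a^*)$ only if it also supports $U^{\alpha,\mu}$ at $u^*$. Otherwise it supports a relaxed object, such as a lottery or a frontier point created by the convexification. You are right that step (i) is where the ``standard'' qualifier is used, but it has to be used to posit the normal, not to derive it.

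\textbf{Step (iii) also fails as written.} You read ``unprotected'' as $\rho_i^*$ containing some $\tilde a_i$ with $\tilde a_i^k\neq a_i^{*k}$, then conclude that maximality fails ``in both cases''. Non-fungibility only excludes indifference. If every $(x_i^*+\tau,\tilde a_i)$ is strictly worse than $(x_i^*,a_i^*)$, the extra options are simply dominated and $(x_i^*,a_i^*)$ stays maximal on $B_i(p^*,T^*,\rho^*)$. ``Cannot be priced by $T_i^*$'' is not a violation of Definition~3. The paper instead reads failure of (iii) as the required right being unavailable, so $(x_i^*,a_i^*)\notin B_i(p^*,T^*,\rho^*)$ and cannot be maximal there. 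Non-fungibility is used only to block the repair of replacing the missing right by a commodity transfer.

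\textbf{Two smaller gaps.}
\begin{itemize}
\item Your (v)--(vi) template does not cover identity-continuity transitions (copy, merge, split). These alter $I^\mu$ after $t_0$ rather than replacing some $\succeq_i^\alpha$; the paper handles them by noting that $L$ was built for a fixed index set. Condition (vi) also has no support-relevance clause, so a ``relevant'' self-modification does not make the new strict contour meet $B_i$ ``by definition''.
\item In (ii), exhibiting one pair of contexts where an added agent has an affordable strictly preferred bundle shows only that failure of (ii) \emph{can} break the support, not that it must. The paper argues instead that when $I^\mu$ varies, the sets $\mathcal{P}^{\alpha,>}(x^*,a^*)$ and $Z^\mu$ themselves vary. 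There is then no single separation problem for one certificate $L$ to settle.
\end{itemize}
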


\begin{proof}
Suppose that $(x^*,a^*)$ is certifiably decentralizable by the standard
supporting-hyperplane construction. Thus there exist
$(p^*,T^*,\rho^*)$ satisfying Definition~3, and the support is obtained
from a separating hyperplane between the strict preferred set and the
feasible allocation-rights set in the autonomy-extended space. For each welfare-bearing agent $i\in I^\mu$, let
$P_i^\alpha(x_i^*,a_i^*)$ and $P_i^{\alpha,>}(x_i^*,a_i^*)$ denote the
weakly and strictly preferred sets at $(x_i^*,a_i^*)$. The relevant
strict product preferred set is
\begin{align*}
\mathcal{P}^{\alpha,>}(x^*,a^*) &
:=
\Big\{((x_i,a_i))_{i\in I^\mu} :
(x_i,a_i)\succeq_i^\alpha(x_i^*,a_i^*)\ \forall i\in I^\mu,
\\&\text{ and }(x_j,a_j)\succ_j^\alpha(x_j^*,a_j^*) \text{for some} j
\in I^\mu
\Big\}.
\end{align*}
Let
\[
Z^\mu\subseteq \prod_{i\in I^\mu}(X_i\times A_i)
\]
be the projection of the feasible allocation-rights set onto the
welfare-bearing agents. Since $(x^*,a^*)$ is autonomy-Pareto optimal,
\[
\mathcal{P}^{\alpha,>}(x^*,a^*)\cap Z^\mu=\varnothing .
\]
The standard support construction therefore requires a non-zero
continuous linear functional
\[
L((x_i,a_i)_{i\in I^\mu})
=
\sum_{i\in I^\mu}p_i^*\cdot x_i
+
\sum_{i\in I^\mu}\pi_i^*\cdot a_i
\]
and a scalar $\gamma$ such that $L(z)\ge \gamma$ on
$\mathcal{P}^{\alpha,>}(x^*,a^*)$ and $L(z)\le \gamma$ on $Z^\mu$,
after the usual normalization at the supported allocation. Under the
standard commodity-market assumptions, the commodity components of this
functional reduce to a common price vector $p^*\in\mathbb{R}_+^\ell$,
while the $\pi_i^*$ components are dual valuations of the
agent-indexed autonomy-rights coordinates. We write $q^*$ for the
welfare-space normal, i.e., $L$ expressed in utility coordinates
$(u_i)_{i\in I^\mu}$, when the argument passes to $U^{\alpha,\mu}$. We now show that each listed condition is required for this certification
to be well defined and to support the same allocation-rights profile
through exchange.

\emph{(i).} The separation step requires a supporting normal at the
welfare vector $u^*$ induced by $(x^*,a^*)$: there must exist
$q^*\neq 0$ such that
\[
q^*\cdot u\le q^*\cdot u^*
\qquad\text{for all }u\in U^{\alpha,\mu}.
\]
Convexity of $U^{\alpha,\mu}$, together with the usual closedness and
boundary conditions, is the classical sufficient route to such a normal.
When the construction instead uses a relaxed or convexified set
$\widehat U^{\alpha,\mu}$, the resulting normal certifies the original
economy only if it supports the original set $U^{\alpha,\mu}$ at the same
welfare vector $u^*$. A normal that supports only a relaxed allocation, a
lottery, or a frontier point introduced by the relaxation is not a
certificate for $(x^*,a^*)$. This gives condition~(i); a sufficient
frontier-preservation criterion is stated in
Proposition~\ref{prop:convexification}.

\emph{(ii).} Both the product preferred set
$\mathcal{P}^{\alpha,>}(x^*,a^*)$ and the feasible set $Z^\mu$ are
defined with index running over $I^\mu=\{i\in I:\mu(i)>0\}$. If $\mu$
varies across the institutional contexts relevant to support, the
separation problem itself varies: the sets being separated are
different in each context, and a hyperplane separating them in one
context is in general not a separating hyperplane in another. A single
certificate $L$ therefore cannot exist. Stability of $\mu$, either by
direct fixing at $(x^*,a^*)$ or by an institutionally invariant
generating procedure $\Sigma$, is the condition that collapses the
family of separation problems to one. This gives condition~(ii).

\emph{(iii).} Consider a non-fungible autonomy-right component
$a_i^{*k}$ required by the target allocation. If this component is not
assigned, enforced, or protected by $\rho^*$, then the target pair
$(x_i^*,a_i^*)$ is not attainable in agent $i$'s rights-feasible budget
set. Prices and lump-sum transfers can relax only the commodity side of
the budget constraint. By non-fungibility, no commodity compensation
vector $\tau$ can make a bundle with the $k$-th right altered
welfare-equivalent to $(x_i^*,a_i^*)$. Thus the dual term $\pi_i^{*k}$
cannot be projected into an ordinary scalar wealth transfer. The
support fails unless the right is assigned, enforced, or protected by
$\rho^*$ in a way compatible with $(x^*,a^*)$. This gives
condition~(iii).

\emph{(iv).} Suppose some $i\in I_S$ has preferences superposed over
$\Theta_i$. The supporting hyperplane is constructed from the strict
upper contour of $(x_i^*,a_i^*)$ on the supported budget set. If no
support-stable selector is available, then different institutional
contexts may generate different intersections
\[
P_i^{\sigma_i(c),>}(x_i^*,a_i^*)\cap B_i(p^*,T^*,\rho^*),
\]
so a bundle that is excluded by the support in one context may be
strictly preferred and budget-feasible in another. The support therefore
does not certify the same choice problem across the relevant contexts.
Support-stability of the welfare selector is required. This gives
condition~(iv).

\emph{(v).} The support must remain valid between the realization of
transfers and the exchange stage. Let $t_0$ be the announcement of
$(p^*,T^*,\rho^*)$, $t_1$ the realization of $T^*$, and $t_2$ the
exchange stage. The separating hyperplane is derived at $t_0$ from the
pre-intervention preference-formation mapping $\Phi_i$. If, after
$T^*$ is realized but before exchange, some agent $j$ can choose an
action $y_j$ that induces a support-relevant unpriced manipulation
$m_{ij}(y_j)(\Phi_i)\neq\Phi_i$, then the relation governing $i$'s
choice at $t_2$ is $\succeq_i^{\alpha,m}$ rather than the relation used
to construct the support. If
\[
P_i^{\alpha,m,>}(x_i^*,a_i^*)\cap B_i(p^*,T^*,\rho^*)\neq\varnothing,
\]
then $(x_i^*,a_i^*)$ is not maximal on the realized budget set and the
support fails at $i$. Benign or anticipated updating that leaves this
intersection empty does not create this failure. This gives
condition~(v).

\emph{(vi).} Self-modification by $i$ between $t_0$ and $t_2$ alters
$\Phi_i$, and hence the relation $\succeq_i^\alpha$ on which
$P_i^\alpha(x_i^*,a_i^*)$ was constructed; identity transitions (copy,
merge, split) alter the index set $I^\mu$ after the welfare-bearing
population was fixed at $t_0$. The hyperplane $L$ was constructed
against $(\succeq_i^\alpha)_{i\in I^\mu}$ with $I^\mu$ fixed. Unless
the transition is irrelevant to $(x^*,a^*)$ (so the pre- and
post-transition welfare objects coincide at $(x^*,a^*)$), priced in
$p^*$ (so the resource cost of transitioning enters the commodity
budget), or governed by $\rho^*$ (so the transition is a permitted move
within the rights profile), $L$ fails to separate the post-transition
analogues of $\mathcal{P}^{\alpha,>}(x^*,a^*)$ and $Z^\mu$. This gives
condition~(vi).

\emph{(vii).} Let $\mathcal{O}$ denote the
institution's observation map from realized strategies and
allocation-rights profiles into verifiable market data. Certifiability
requires that whenever a realization $s$ is observationally equivalent
to the target realization $s^*$---that is,
$\mathcal{O}(s)=\mathcal{O}(s^*)$---either $s$ implements an
allocation-rights profile welfare-equivalent to $(x^*,a^*)$ under the
selected welfare relations, or the deviation is detectable as
non-compliance. If this condition fails, there exists an
observationally equivalent realization that passes the institution's
market-data checks while implementing a different welfare object. In
that case market clearance and budget maximization do not certify
compliance with $(p^*,T^*,\rho^*)$. This gives condition~(vii).
\end{proof}

Conditions (i)--(vii) are necessary for the standard
supporting-hyperplane construction to certify $(p^*,T^*,\rho^*)$ (the
theorem does not rule out non-hyperplane supports). As a \emph{specialization} (not a
converse), when preferences are convex, autonomy rights reduce to ordinary
commodities or fixed rights, $\mu$ is settled, each relevant preference
family is singleton, $\{m_{ij}\}=\emptyset$, self-modification is absent,
and verification is complete, the classical SWT
\cite{debreu1959theory,mwg1995microeconomic} is recovered. The following propositions make these mechanisms explicit by isolating
the corresponding failure modes.

\begin{proposition}[Non-Decentralizability under Non-Fungible Autonomy Rights]\label{prop:nonfungible}
If an autonomy-Pareto optimum $(x^*,a^*)$ depends on a non-fungible
autonomy-right component $a_i^{*k}$ for some welfare-bearing $i$, and no
admissible $\rho\in R$ assigns, enforces, or protects $a_i^{*k}$ for $i$
at $(x^*,a^*)$, then prices and lump-sum transfers alone cannot
decentralize $(x^*,a^*)$.
\end{proposition}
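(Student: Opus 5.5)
We argue by contradiction, reusing the step for condition~(iii) in the proof of Theorem~\ref{thm:main} and making it self-contained. The phrase ``prices and lump-sum transfers alone'' is read as follows: the decentralizing triple $(p,T,\rho)$ in Definition~3 may only use some $\rho\in R$, and by hypothesis no admissible $\rho$ assigns, enforces, or protects $a_i^{*k}$ for $i$. Suppose, for contradiction, that $(p,T,\rho)$ decentralizes $(x^*,a^*)$. Definition~3 then requires two things at the welfare-bearing agent $i$. First, $(x_i^*,a_i^*)\in B_i(p,T,\rho)$, hence $a_i^*\in\rho_i$. Second, $(x_i^*,a_i^*)$ is $\succeq_i^\alpha$-maximal on that budget set.

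\textbf{Step 1 (the case $a_i^*\notin\rho_i$).} Make ``not protected'' precise. The natural reading is that $\rho_i$ does not pin the $k$-th coordinate at $a_i^{*k}$: either $a_i^*\notin\rho_i$, or $\rho_i$ also contains some $\tilde a_i$ with $\tilde a_i^r=a_i^{*r}$ for $r\neq k$ and $\tilde a_i^k\neq a_i^{*k}$, which $i$ or a counterparty may realize instead. In the first case the target bundle is simply not in $B_i$, so Definition~3 fails immediately.

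\textbf{Step 2 (the case where $\rho_i$ admits an altered right).} A variant $\tilde a_i$ is in $\rho_i$. The realized profile is then determined only up to the $k$-th coordinate, and the transfers $T$ act only on the commodity inequality $p\cdot x_i\le p\cdot\omega_i+T_i$. Consider any $(x_i^*+\tau,\tilde a_i)$ with $\tau$ chosen to keep it budget-feasible. By non-fungibility (Definition~4), this pair is never indifferent to $(x_i^*,a_i^*)$, for every such $\tau$. This leaves two subcases.
\begin{itemize}
\item If, for some budget-feasible $\tau$, the pair is strictly preferred, then maximality of $(x_i^*,a_i^*)$ fails.
\item If all budget-feasible such pairs are strictly worse, then the support does not protect the right. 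Nothing in $(p,T)$ prevents the realization $\tilde a_i$, because an unprotected coordinate may be altered by other agents' actions. The realized profile therefore differs from $(x^*,a^*)$ in a welfare-relevant way. Since no commodity compensation restores indifference, lump-sum transfers cannot repair it.
\end{itemize}
Put in dual terms, the multiplier $\pi_i^{*k}$ cannot be absorbed into a scalar $T_i$. Absorbing it would mean some $\tau$ with $p\cdot\tau=\pi_i^{*k}(\tilde a_i^k-a_i^{*k})$ makes $i$ indifferent, and that is exactly what non-fungibility rules out.

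\textbf{Main obstacle.} The difficulty is in the second subcase: turning ``unprotected'' into a formal failure of Definition~3, which as written checks only maximality and market clearing. My first attempt would be to adopt the certifiable reading, namely that the support must pin the realized profile, as in condition~(vii). Under that reading an unprotected coordinate is a degree of freedom that the support fails to determine, and the argument closes. If that is disallowed, I would instead formalize ``protects'' as ``$\rho_i$ restricts the $k$-th coordinate to $\{a_i^{*k}\}$'', so that the negation yields exactly the dichotomy of Steps~1--2. The first subcase of Step~2 then closes directly by non-maximality. The other uses the dependence hypothesis: the optimum ``depends on'' $a_i^{*k}$, so replacing it with $\tilde a_i^k$ changes $i$'s welfare, and the allocation is not implemented.
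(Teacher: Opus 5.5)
Your Step~1 is the whole of the paper's proof. The paper reads the hypothesis that no admissible $\rho\in R$ assigns, enforces, or protects $a_i^{*k}$ as saying that $(x_i^*,a_i^*)$ lies outside the rights-feasible part of $B_i(p,T,\rho)$ for every admissible $\rho$. Definition~3 requires that exact pair to be $\succeq_i^\alpha$-maximal on $B_i$, so decentralization fails at once. The paper cites non-fungibility only to exclude a welfare-equivalent substitute with the $k$-th right altered. It adds nothing to the failure of Definition~3 itself.

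The genuine gap is the second subcase of Step~2, and it cannot be closed, because under your broader reading of ``unprotected'' the statement is false relative to Definition~3. Suppose every admissible $\rho_i$ contains both $a_i^*$ and some $\tilde a_i$ differing from it only in coordinate $k$. Suppose also that $i$'s utility is bounded in commodities, plus a premium for $a_i^{*k}$ that exceeds that bound. Then every $(x_i^*+\tau,\tilde a_i)$ is strictly worse than $(x_i^*,a_i^*)$, so the right is non-fungible and the optimum depends on it. Yet $(x_i^*,a_i^*)$ can still be $\succeq_i^\alpha$-maximal on $B_i$, and Definition~3 holds.

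Definition~3 checks only maximality and market clearing. It does not require a unique maximizer, nor that $\rho$ pin down the realized profile. So neither your claim that other agents may alter an unprotected coordinate nor the ``dependence'' hypothesis produces a contradiction.

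Your proposed repairs do not help. Formalizing ``protects'' as restricting the $k$-th coordinate to $\{a_i^{*k}\}$ leaves exactly this subcase open. The certifiable reading via condition~(vii) changes the solution concept, so it proves a different statement. Adopt the paper's reading, under which Step~1 is the entire argument, and drop Step~2.
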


\begin{proof}
Since no admissible $\rho\in R$ assigns, enforces, or protects
$a_i^{*k}$ for $i$, the target pair $(x_i^*,a_i^*)$ is not contained in
the rights-feasible part of $i$'s budget set. Prices and scalar
transfers alter the commodity budget but cannot add the missing right.
By non-fungibility, no commodity compensation vector yields a
welfare-equivalent substitute with the $k$-th right altered and the
other autonomy coordinates held fixed. Hence under prices and transfers
alone, $(x_i^*,a_i^*)$ is not budget-feasible, let alone
$\succeq_i^\alpha$-maximal.
\end{proof}
Proposition~\ref{prop:nonfungible} shows that when autonomy rights are
non-fungible, no price--transfer scheme can reproduce the required
rights allocation, so decentralization fails even when a supporting
hyperplane exists.

\begin{proposition}[Manipulation Failure]\label{prop:manipulation}
Let $(p^*,T^*,\rho^*)$ support $(x^*,a^*)$ via the hyperplane of
\S\ref{subsec:hyperplane}, constructed for the pre-manipulation relation
$\succeq_i^\alpha$. Suppose there exist a welfare-bearing agent $i\in I^\mu$, an agent
$j\in I$, and an action $y_j\in Y_j$ feasible after $T^*$ but before
exchange such that
$m_{ij}(y_j)(\Phi_i)\neq\Phi_i$ and the induced post-manipulation relation
$\succeq_i^{\alpha,m}$ strictly alters the preferred set at
$(x_i^*,a_i^*)$, in the sense that
$P_i^{\alpha,m,>}(x_i^*,a_i^*)\cap B_i(p^*,T^*,\rho^*)\neq\varnothing$.
Then $(x^*,a^*)$ is not decentralized by $(p^*,T^*,\rho^*)$.
\end{proposition}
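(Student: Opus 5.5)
The plan is a direct argument: exhibit a strictly preferred, budget-feasible bundle at $i$ under the relation that actually governs $i$'s choice at the exchange stage. Fix the timing convention from the proof of Theorem~\ref{thm:main}. The support $(p^*,T^*,\rho^*)$ is announced at $t_0$ and constructed from the pre-manipulation relation $\succeq_i^\alpha$. The transfer $T^*$ is realized at $t_1$, and exchange occurs at $t_2$. Since $y_j$ is feasible in $(t_1,t_2)$ and $m_{ij}(y_j)(\Phi_i)\neq\Phi_i$, the preference-formation mapping operative at $t_2$ is $\Phi_i'=m_{ij}(y_j)(\Phi_i)$. By the definition of $\Phi_i$ as the primitive yielding $i$'s relation from its history, the relation governing $i$'s choice at exchange is therefore $\succeq_i^{\alpha,m}$.

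The key step uses the hypothesis that $P_i^{\alpha,m,>}(x_i^*,a_i^*)\cap B_i(p^*,T^*,\rho^*)\neq\varnothing$ to pick a pair $(x_i',a_i')$ in this intersection. Then $(x_i',a_i')\succ_i^{\alpha,m}(x_i^*,a_i^*)$, and $(x_i',a_i')$ satisfies the budget inequality $p^*\cdot x_i'\le p^*\cdot\omega_i+T_i^*$ and the rights constraint $a_i'\in\rho_i^*$. Note that $B_i$ depends only on $(p^*,T^*,\rho^*)$, which are fixed by $t_1$, so the manipulation cannot shrink the budget set. Hence $(x_i^*,a_i^*)$ is not $\succeq_i^{\alpha,m}$-maximal on $B_i(p^*,T^*,\rho^*)$. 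Because $i\in I^\mu$, Definition~3 requires maximality for $i$ on this budget set. Maximality therefore fails at $i$, and $(p^*,T^*,\rho^*)$ does not decentralize $(x^*,a^*)$. This is exactly the step labelled (v) in the proof of Theorem~\ref{thm:main}, now stated as a sufficient condition for failure rather than a necessary condition for success.

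The main obstacle is interpretive rather than computational. Definition~3 is stated for a single fixed relation $\succeq_i^\alpha$, so I must justify that decentralization is evaluated against the relation in force at exchange, not the one used to build the hyperplane. I would handle this by appealing to the certifiability reading adopted in Theorem~\ref{thm:main}: the support must remain valid for realized choices through $t_2$.

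I would also flag the edge case in which the post-manipulation relation still ranks $(x_i^*,a_i^*)$ weakly best. That case is the benign case of Definition~7, and it is excluded precisely by the nonemptiness hypothesis. Finally, I would note that no appeal to market clearing or to other agents is needed, since failure of maximality for a single welfare-bearing agent already violates Definition~3.
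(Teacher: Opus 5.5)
Your proposal is correct and follows the paper's proof essentially step for step. The $t_0/t_1/t_2$ timeline replaces $\Phi_i$ by $m_{ij}(y_j)(\Phi_i)$ before exchange. An element of $P_i^{\alpha,m,>}(x_i^*,a_i^*)\cap B_i(p^*,T^*,\rho^*)$ then shows that $(x_i^*,a_i^*)$ is not $\succeq_i^{\alpha,m}$-maximal, so the support fails at $i$.

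You explicitly note that Definition~3 must be read against the relation in force at $t_2$, which also tacitly requires that the merely feasible action $y_j$ is actually taken. The paper's proof asserts this interpretive step without comment, so your version makes it visible.
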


\begin{proof}
Because the manipulation $m_{ij}(y_j)$ acts between $t_1$ and $t_2$,
the preference-formation mapping $\Phi_i$ used to construct the
supporting hyperplane at $t_0$ is replaced by a new mapping before
exchange, yielding the post-manipulation relation
$\succeq_i^{\alpha,m}\neq\succeq_i^\alpha$ on $X_i\times A_i$. The
realized choice problem at $t_2$ is therefore evaluated under
$\succeq_i^{\alpha,m}$ on $B_i(p^*,T^*,\rho^*)$. The hypothesis
$P_i^{\alpha,m,>}(x_i^*,a_i^*)\cap B_i(p^*,T^*,\rho^*)\neq\varnothing$
exhibits a feasible bundle strictly preferred to $(x_i^*,a_i^*)$ under
$\succeq_i^{\alpha,m}$, so $(x_i^*,a_i^*)$ is not
$\succeq_i^{\alpha,m}$-maximal on its budget set and the support
fails at $i$.
\end{proof}
Together, these propositions identify two distinct ways in which the
supporting-hyperplane construction can fail: through the inability to
reduce autonomy rights to wealth transfers, and through instability of
the preference relation over the decentralization timeline. We now turn
to the remaining condition governing the existence of the supporting
hyperplane itself.

\subsection{Frontier-preserving convexification}
\label{subsec:convexification}
Even in the absence of these failures, the supporting hyperplane may not
exist in the autonomy-extended setting if the welfare possibility set is
nonconvex. The following condition ensures that convexification does not
introduce spurious support at the relevant welfare point. Condition~(i) rules out the failure mode identified in the literature  \cite{starrett1972nonconvexities}: once rights are bundled with
allocations, $U^{\alpha,\mu}$ can have ``holes'' or exposed
nonconvexities at Pareto-relevant points, so a supporting hyperplane can
exist for $\widehat U^{\alpha,\mu}=\mathrm{co}(U^{\alpha,\mu})$ without
any genuine support for $U^{\alpha,\mu}$ itself. Condition~(i) requires
that convexification not create a new frontier face at $u^*$.

\begin{proposition}[Purification as frontier-preserving convexification]\label{prop:convexification}
Consider the continuum-agent extension of the welfare possibility set.
Let $(S,\mathcal S,\lambda)$ be a nonatomic probability space of agent
characteristics, let $n:=|I^\mu|<\infty$, and let
$G:S\rightrightarrows \mathbb{R}^n$ be a measurable correspondence with
nonempty compact values, integrably bounded by an integrable function
$b:S\to\mathbb{R}_+$:
\[
\|g\|\le b(s)\qquad\text{for all }g\in G(s)\text{ and a.e. }s.
\]
Interpret $G(s)$ as the set of welfare-contribution vectors attainable by
type $s$ from feasible commodity-rights choices. Define the aggregate
welfare possibility set and its pointwise convexification by
\[
U_\lambda
:=
\left\{
\int_S g(s)\,d\lambda(s)-r:
g(s)\in G(s)\ \text{a.e.},\ r\in\mathbb{R}^n_+
\right\},
\]
and
\[
\widehat U_\lambda
:=
\left\{
\int_S h(s)\,d\lambda(s)-r:
h(s)\in \operatorname{co}G(s)\ \text{a.e.},\ r\in\mathbb{R}^n_+
\right\}.
\]
Then
\[
U_\lambda=\widehat U_\lambda.
\]
Consequently, aggregate convexity of the welfare possibility set follows
from nonatomicity even when the individual correspondence $G(s)$ is
nonconvex. In particular, if $u^*\in U_\lambda$ is supported by a normal
$q^*$ in $\widehat U_\lambda$, then the same $q^*$ supports
$U_\lambda$ at $u^*$.
\end{proposition}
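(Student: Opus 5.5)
The plan is to reduce the claim to the Lyapunov--Aumann theorem on integrals of correspondences over a nonatomic measure space. There are three steps, carried out in order.

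\emph{Step 1: the trivial inclusion.} Since $G(s)\subseteq\operatorname{co}G(s)$ for every $s$, each measurable selection of $G$ is also a selection of $\operatorname{co}G$. Hence $U_\lambda\subseteq\widehat U_\lambda$.

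\emph{Step 2: the reverse inclusion.} Write $\int G\,d\lambda$ for the Aumann integral, that is, the set of integrals $\int_S g\,d\lambda$ over integrable measurable selections $g$ of $G$. The nonatomic Lyapunov--Aumann theorem says that for a measurable, integrably bounded correspondence into $\mathbb{R}^n$ with $n<\infty$,
\[
\int_S G\,d\lambda=\int_S \operatorname{co}G\,d\lambda ,
\]
and that this common set is convex. When $G$ has compact values it is also compact.

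To apply it, I first check that its hypotheses hold.
\begin{enumerate}
\item[(a)] $\operatorname{co}G$ is measurable. Carath\'eodory's theorem writes each point of $\operatorname{co}G(s)$ as a convex combination of $n+1$ points of $G(s)$, and this representation gives measurability.
\item[(b)] $\operatorname{co}G$ inherits the integrable bound $b$, because the norm is convex.
\item[(c)] $\operatorname{co}G(s)$ is compact, being the convex hull of a compact set in $\mathbb{R}^n$.
\item[(d)] Selections of both correspondences exist and are integrable. Existence follows from Kuratowski--Ryll-Nardzewski, and integrability from the bound $b$.
\end{enumerate}

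With the integrals equal, the sets $U_\lambda$ and $\widehat U_\lambda$ are both obtained by subtracting $\mathbb{R}^n_+$ from the same set, so $U_\lambda=\widehat U_\lambda$. Convexity of $U_\lambda$ follows at once, since it is the Minkowski sum of a convex set and the convex cone $-\mathbb{R}^n_+$. This gives the ``aggregate convexity'' clause.

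\emph{Step 3: the support consequence.} This is immediate. If $q^*\cdot u\le q^*\cdot u^*$ for all $u\in\widehat U_\lambda$, the same inequality holds on the identical set $U_\lambda$, and $u^*\in U_\lambda$ by hypothesis. So $q^*$ supports $U_\lambda$ at $u^*$, which is exactly the frontier-preservation property required in condition~(i) of Theorem~\ref{thm:main}.

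\emph{Expected obstacle.} The main difficulty is the inclusion $\int\operatorname{co}G\subseteq\int G$ in Step 2. Here I would simply invoke the theorem as a known result rather than reprove it. If a sketch is wanted, I would proceed as follows:
\begin{enumerate}
\item[(a)] Use Carath\'eodory to write a selection $h$ of $\operatorname{co}G$ as $h(s)=\sum_{k=0}^n \beta_k(s)g_k(s)$, with measurable weights $\beta_k$ and selections $g_k$ of $G$.
\item[(b)] Apply Lyapunov's convexity theorem to the nonatomic vector measure $E\mapsto\big(\int_E g_k\,d\lambda\big)_k$. This yields a measurable partition $\{E_k\}$ of $S$ with $\int_{E_k}g_k\,d\lambda=\int\beta_k g_k\,d\lambda$ for each $k$.
\item[(c)] Set $g:=g_k$ on $E_k$. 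Then $g$ is a selection of $G$ with the same integral as $h$.
\end{enumerate}
The measurability bookkeeping in this construction is routine. Nonatomicity and $n<\infty$ are used essentially here: Lyapunov's theorem needs both.
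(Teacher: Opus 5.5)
Your proposal is correct and follows essentially the same route as the paper. Both proofs take the trivial inclusion $U_\lambda\subseteq\widehat U_\lambda$, obtain the reverse inclusion from the Lyapunov--Aumann (purification) theorem $\int_S G\,d\lambda=\int_S\operatorname{co}G\,d\lambda$ for nonatomic $\lambda$ and $n<\infty$, and read off the support claim from the set equality. Your hypothesis checks, the explicit convexity argument, and the Carath\'eodory--Lyapunov sketch fill in details the paper leaves to the cited result; note that the simultaneous-partition step in that sketch really uses the Dvoretzky--Wald--Wolfowitz form of Lyapunov's theorem.
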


\begin{proof}
Because $G$ is measurable, compact-valued, and integrably bounded, the
Aumann integral $\int_S G\,d\lambda$ is well defined. Since
$G(s)\subseteq \operatorname{co}G(s)$ for a.e. $s$, we have
$U_\lambda\subseteq \widehat U_\lambda$. For the reverse inclusion, take any
\[
\widehat u=\int_S h(s)\,d\lambda(s)-r\in \widehat U_\lambda,
\qquad h(s)\in \operatorname{co}G(s)\ \text{a.e.}
\]
The nonatomicity of $\lambda$ and the purification theorem for Aumann
integrals imply that every integrable selection of the convexified
correspondence $\operatorname{co}G$ has the same integral as some
integrable selection $g$ of the original correspondence $G$:
\[
\int_S h(s)\,d\lambda(s)=\int_S g(s)\,d\lambda(s),
\qquad g(s)\in G(s)\ \text{a.e.}
\]
Therefore $\widehat u=\int_S g(s)\,d\lambda(s)-r\in U_\lambda$, so
$\widehat U_\lambda\subseteq U_\lambda$. Hence
$U_\lambda=\widehat U_\lambda$. The final claim follows immediately: if $q^*$ supports
$\widehat U_\lambda$ at $u^*$, then
$q^*\cdot u\le q^*\cdot u^*$ for every $u\in \widehat U_\lambda$; since
$U_\lambda=\widehat U_\lambda$, the same inequality holds for every
$u\in U_\lambda$.
\end{proof}
This proposition gives a precise sense in which condition~(i) can be
weakened: individual nonconvexities need not defeat the supporting-normal
step when the relevant heterogeneity is represented by a nonatomic
population and the welfare possibility set is an Aumann aggregate. Other frontier-preserving convexifications are not ruled out, but they must be established directly.  

\subsection{Manipulation timeline}
\label{subsec:timeline}

Write $t_0$ for the announcement of $(p^*,T^*,\rho^*)$, $t_1$ for the
realization of transfers, and $t_2$ for exchange.

\begin{lemma}[Preference drift between transfer realization and exchange]\label{lem:timeline}
Suppose $(p^*,T^*,\rho^*)$ supports $(x^*,a^*)$ via the hyperplane of
\S\ref{subsec:hyperplane}. If agent $j$ acts between $t_1$ and $t_2$ via
some $y_j\in Y_j$ that induces a support-relevant manipulation
externality on a welfare-bearing $i\in I^\mu$, then the choice problem
actually faced by $i$ at $t_2$ is evaluated against a preference relation
$\succeq_i^{\alpha,m}$ different from the relation used to derive the
supporting hyperplane at $t_0$.
\end{lemma}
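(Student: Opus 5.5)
The plan is to unpack the definition of a support-relevant manipulation externality (Definition~7) and use the timeline $t_0<t_1<t_2$ fixed above. The argument has three steps. The first two identify which relation was used and which is in force; the third shows the two differ.

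\emph{Step 1 (the relation used at $t_0$).} The hyperplane of \S\ref{subsec:hyperplane} is built at $t_0$ from the upper-contour sets $P_i^\alpha(x_i^*,a_i^*)$. These are generated by the pre-intervention mapping $\Phi_i$, evaluated at $i$'s history up to $t_0$, and so by $\succeq_i^\alpha$. Since $L$ and hence $(p^*,T^*,\rho^*)$ depend on $\succeq_i^\alpha$ only through these sets, the relation used to derive the support is $\succeq_i^\alpha=\Phi_i(h_i)$.

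\emph{Step 2 (the relation in force at $t_2$).} Agent $j$ chooses $y_j$ in $(t_1,t_2)$. By the definition of $m_{ij}$, this replaces $\Phi_i$ with $\Phi_i':=m_{ij}(y_j)(\Phi_i)$ before exchange. The choice problem at $t_2$ is still posed on $B_i(p^*,T^*,\rho^*)$, because prices, transfers and rights were all fixed by $t_1$. That problem is evaluated under $\succeq_i^{\alpha,m}:=\Phi_i'(h_i)$, the relation the post-intervention mapping yields.

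\emph{Step 3 (the two relations differ).} This is the step that needs care. The inequality $\Phi_i'\neq\Phi_i$ of mappings does not by itself force the induced relations to differ at the realized history $h_i$. So I will not rely on it. Instead I will use the second clause of support-relevance:
\[
P_i^{\alpha,m,>}(x_i^*,a_i^*)\cap B_i(p^*,T^*,\rho^*)\neq P_i^{\alpha,>}(x_i^*,a_i^*)\cap B_i(p^*,T^*,\rho^*).
\]
Suppose, for contradiction, that $\succeq_i^{\alpha,m}=\succeq_i^\alpha$. Then the two relations would have identical strict upper-contour sets at $(x_i^*,a_i^*)$, and so identical intersections with the common budget set. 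This contradicts the displayed inequality, so $\succeq_i^{\alpha,m}\neq\succeq_i^\alpha$.

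The contrapositive also shows that these relations differ exactly on the part of $X_i\times A_i$ that matters for support. This links the lemma to Proposition~\ref{prop:manipulation} and condition~(v) of Theorem~\ref{thm:main}.

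The main obstacle is the interpretive point that both relations are evaluated at the same history and on the same budget set. This is what makes the comparison of strict upper contours meaningful. I will state it explicitly as part of the convention fixed by the timeline before running the contradiction.
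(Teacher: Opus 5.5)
Your proof is correct. It follows the paper's skeleton: the relation fixed by $\Phi_i$ at $t_0$, the relation in force at $t_2$ on the unchanged budget $B_i(p^*,T^*,\rho^*)$, and a comparison of the two. It differs, however, at the one step that carries content.

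The paper concludes that $\succeq_i^{\alpha,m}$ is ``a new relation'' directly from the first clause of support-relevance, $m_{ij}(y_j)(\Phi_i)\neq\Phi_i$. As you observe, that inference is not valid as stated. Two mappings $H_i\to\mathcal P(X_i\times A_i)$ can differ only at histories other than the realized one and still induce the same relation. You instead derive the difference from the contour clause $P_i^{\alpha,m,>}(x_i^*,a_i^*)\cap B_i\neq P_i^{\alpha,>}(x_i^*,a_i^*)\cap B_i$. This genuinely forces $\succeq_i^{\alpha,m}\neq\succeq_i^\alpha$, because strict upper-contour sets are determined by the relation.

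The paper's version is shorter and keeps the emphasis on manipulation acting at the level of $\Phi_i$. Yours is the one that actually establishes the lemma. It also locates the disagreement in a strict comparison against $(x_i^*,a_i^*)$ inside the supported budget, which is exactly what Proposition~\ref{prop:manipulation} then uses. The paper's closing sentence, that the $t_0$-hyperplane fails to certify maximality, is really that proposition's content rather than part of this lemma.

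Two small refinements. First, Step~3 does not itself use the same-history convention. It needs only that the relation in force at $t_2$ is the post-intervention relation named in Definition~7. The contour comparison is then meaningful because both sides are intersected with the same $B_i(p^*,T^*,\rho^*)$, whatever histories the relations come from. Second, ``differ exactly on the part that matters'' overstates what you have shown. You have shown that the relations disagree on at least one strict comparison with $(x_i^*,a_i^*)$ within $B_i(p^*,T^*,\rho^*)$, not that they agree everywhere else.
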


\begin{proof}
The separating argument fixes $P_i^\alpha(x_i^*,a_i^*)$ from the relation
induced by $\Phi_i$ at $t_0$. By the definition of a support-relevant
manipulation externality, $m_{ij}(y_j)(\Phi_i)\neq\Phi_i$ yields a new
preference-formation mapping, hence a new relation
$\succeq_i^{\alpha,m}$ on $X_i\times A_i$. The commodity budget may be
unchanged, but the maximization problem at $t_2$ is now a
$\succeq_i^{\alpha,m}$-maximization problem. If the manipulated strict
upper contour intersects $B_i(p^*,T^*,\rho^*)$, the original
$t_0$-hyperplane does not certify maximality of $(x_i^*,a_i^*)$.
\end{proof}

\section{Discussion and Conclusion}
\label{sec:discussion}
Our aim has been to identify which parts of classical welfare theory survive into superintelligent economies unchanged, which require conditional
restatement, and which require institutional machinery beyond the
price-transfer apparatus. The broader implication is that the SWT's classical
separation of equity from efficiency is conditional on features of the
welfare object that classical theory could take for granted and
superintelligent economics may not.  The autonomy-qualified theorem reframes decentralization in
superintelligent economies as a joint problem over three institutional
objects rather than one: a price vector $p^*$, a transfer profile
$T^*$, and a rights assignment $\rho^*$, with a welfare-selector
mechanism $\sigma$ sitting behind condition~(iv). $\rho^*$ and $\sigma$ must be institutionally specified, which means decentralization cannot be read off from
equilibrium prices alone. Conditions (ii), (v), (vi) and (vii) further
demand ex ante procedural machinery---status rules, anti-manipulation
windows, self-modification governance, oversight
infrastructure---before any support step can be attempted. The
classical separation between equity and efficiency therefore survives
only in the regime where conditions (ii)--(vii) hold; outside that
regime, redistribution and efficiency are coupled through $\rho^*$ and
$\sigma$. Finally, some clarifications are in order. The theorem retains force even if economic preference superposition is read as metaphorical, since condition~(iv) is triggered by the welfare-economic definition of \S\ref{sec:formalism} and the alignment literature already shows a single rationalizing relation is not always available~\cite{hadfieldmenell2017inverse,casper2023open,gabriel2020values,conitzer2024social}; an expected-utility reply over $\Theta_i$ does not escape this, since selector stability is exactly the requirement that the implicit prior be context-invariant. Future research may seek to explore these trends, including the disruptive role that superintelligent systems pose to assumptions of autonomy \cite{perrier2026post} underpinning welfare economics.

\bibliographystyle{splncs04}
\bibliography{refs}

\end{document}